\newtheorem{axiom}{Axiom}[section]
\newtheorem{theorem}[axiom]{Theorem}
\newcommand{\qed}{~~$\Box$}
\newcommand{\pulley}{{\sc Subset-Sum-Interval}}
\newcommand{\IPG}{IPG}
\newcommand{\NE}{NE}
\newcommand{\boxxx}[1]
 {\begin{center}\fbox{\begin{minipage}{12.00cm}#1\smallskip\end{minipage}}\end{center}}
\title{A note on the complexity of integer programming games}
\author{
  Margarida Carvalho
  %\thanks{Use footnote for providing further    information about author (webpage, alternative    address)---\emph{not} for acknowledging funding agencies.} 
  \\
  CIRRELT and D\'epartement d'informatique et de recherche op\'erationnelle\\
   Universit\'e de Montr\'eal\\
 Montreal, QC H3C 3J7 \\
  \texttt{carvalho@iro.umontreal.ca} \\
  %% examples of more authors
%   \And
%Andrea Lodi \\
%  Canada Excellence Research Chair\\
%  Polytechnique de Montr\'eal\\
%  Montreal, QC H3T 1J4 \\
%  \texttt{andrea.lodi@polymtl.ca} \\
  %% \AND
  %% Coauthor \\
  %% Affiliation \\
  %% Address \\
  %% \texttt{email} \\
  %% \And
  %% Coauthor \\
  %% Affiliation \\
  %% Address \\
  %% \texttt{email} \\
  %% \And
  %% Coauthor \\
  %% Affiliation \\
  %% Address \\
  %% \texttt{email} \\
}
\begin{document}
\maketitle

\begin{abstract}
In this brief note, we prove that the existence of Nash equilibria on integer programming games is $\Sigma^p_2$-complete.
\end{abstract}

% keywords can be removed
\keywords{Computational Complexity \and Game Theory \and Integer Programming \and Nash Equilibria}

\section{Introduction}

Integer programming games ({\IPG}s) model games in which there is a finite set of \emph{players} $M=\lbrace 1,2 \ldots, m \rbrace$ and   each player $p \in M$ has a set of \emph{feasible strategies} $X^p$ given by  lattice points inside a polytopes described  by  finite systems  of  linear  inequalities. Therefore, each players aims to solve 
\begin{equation}
  \max_{x^p \in X^P}  \quad  \Pi^p(x^p,x^{-p}),
\label{GeneralProblem}%
\end{equation}
where $x^p$ is player $p$'s strategy and $x^{-p}$ is the vector of strategies of all players, except player $p$.

A vector $x \in \prod_{p \in M} X^p$ is called \emph{a pure profile of strategies}.  If a pure profile of strategies $x$ solves the optimization problem~(\ref{GeneralProblem}) for all players, it is called \emph{pure Nash equilibrium}.  A game may fail to have pure equilibria and, therefore, a broader  solution concept for a game must be introduced, the \emph{Nash equilibria}. Under this concept, each player can assign probabilities to her pure strategies.  Let $\Delta^p$ denote the space of Borel probability measures over $X^p$ and $\Delta = \prod_{p \in M} \Delta^p$. Each player $p$'s expected payoff for a profile of strategies $\sigma \in \Delta$ is
 \begin{equation}
 \Pi^p(\sigma) = \int_{X^p} \Pi^p(x^p,x^{-p}) d \sigma.
\label{expected_utility}
\end{equation}
 A \emph{Nash equilibrium} (\NE) is a profile of strategies $\sigma  \in \Delta$ such that
\begin{equation}
 \Pi^p(\sigma)  \geq \Pi^p(x^p, \sigma^{-p}),  \qquad \forall p \in M \qquad \forall x^p \in X^p .
\label{NE_definition}
\end{equation}

In~\cite{Carvalho2018}, the authors discuss the existence of Nash equilibria for integer programming games. It is proven that the existence of pure Nash equilibria for {\IPG}s is  $\Sigma^p_2$-complete and that even the existence of Nash equilibria is  $\Sigma^p_2$-complete. However, the later proof seems incomplete in the ``proof of only if'', since it does not support  why we can conclude that the leader cannot guarantee a payoff of 1. In the following section, we provide a correct proof using a completely new reduction. 

\section{Computational Complexity}
In what follows, we show that even in the simplest case, linear integer programming games with two players, the existence of Nash equilibria is a  $\Sigma^p_2$-complete problem.

\begin{theorem}
The problem of deciding if an {\IPG} has a Nash equilibrium is $\Sigma^p_2$-complete problem.
\end{theorem}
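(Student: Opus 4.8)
The plan is to prove both membership in $\Sigma^p_2$ and $\Sigma^p_2$-hardness. For membership I would use the characterization $\Sigma^p_2 = \mathrm{NP}^{\mathrm{NP}}$. Deciding whether a profile $\sigma$ is a \NE{} amounts to checking, for every player $p$, that no pure strategy in $X^p$ beats the equilibrium payoff $\Pi^p(\sigma)$; since the best response of $p$ against $\sigma^{-p}$ is the optimum of an integer program, each no-profitable-deviation condition is a co-NP query (``the optimum of this IP does not exceed the current payoff''). I would first invoke the structural fact that whenever a \NE{} exists there is one supported on polynomially many pure strategies with rational probabilities of polynomial bit-length; a nondeterministic machine then guesses such a profile and discharges each deviation check with an NP oracle, placing the problem in $\mathrm{NP}^{\mathrm{NP}} = \Sigma^p_2$.

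For hardness I would reduce from the $\Sigma^p_2$-complete problem \pulley: given positive integers $w_1,\dots,w_n$ and integers $c,k$, decide whether some value $s$ in the interval $[c,\,c+2^k-1]$ is \emph{not} equal to $\sum_{i\in S} w_i$ for any $S\subseteq\{1,\dots,n\}$. Its $\exists s\,\forall S$ shape mirrors exactly the quantifier structure of \NE{} existence. I would build a two-player linear \IPG. Player $1$ encodes a candidate value $s = c + \sum_{j=1}^k 2^{j-1} z_j$ through binary variables $z_j$, and additionally owns an unbounded variable $u\ge 0$ serving as an \emph{escape} direction. Player $2$ encodes, via binary variables $g_j$ and subset variables $y_i$, a value she can actually realize as a subset sum, enforced by the constraint $\sum_i w_i y_i = c + \sum_j 2^{j-1} g_j$ living entirely inside her own set $X^2$ (cross-player restrictions are not permitted, so realizability must be internal). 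A bilinear coupling rewards Player $2$ for bit-agreement between $g$ and $z$, i.e.\ for matching Player $1$'s value with a genuine subset sum, while Player $1$'s payoff contains $u$ multiplied by a coefficient that is positive precisely when Player $2$ achieves a full match and nonpositive otherwise.

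The two directions then read as follows. If the \pulley{} instance is a \emph{yes}-instance, Player $1$ commits purely to a non-representable $s^\star$; Player $2$ cannot realize $g=s^\star$ with any subset, so she never attains a full match, the coefficient of $u$ stays nonpositive, $u=0$ is optimal, and I would verify that the resulting profile is a pure \NE. Conversely, if every value in the interval is a subset sum, then against any strategy of Player $1$ — pure or mixed — Player $2$ has a best response that fully matches some $s$ in the support, turning the coefficient of $u$ positive and giving Player $1$ an unbounded improving direction, hence no best response and no \NE{}.

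The step I expect to be the crux is exactly this converse, ruling out \emph{all} mixed equilibria in the \emph{no}-case — the very point where the earlier argument was incomplete. The unbounded escape variable is what makes this airtight: since finite games always possess equilibria by Nash's theorem, non-existence must be forced through the noncompactness of $X^1$. I would argue that in any candidate $\sigma$ some value $s$ in Player $1$'s support is matched by Player $2$ with positive probability, so the expected coefficient of $u$ is strictly positive, whence $\Pi^1(\cdot,\sigma^{-1})$ is unbounded above and $\sigma$ violates~(\ref{NE_definition}). Making the match indicator and its interaction with $u$ genuinely linear in each player's own variables — so the game is a bona fide \emph{linear} \IPG{} and the ``full match'' threshold is detected exactly — is the main modelling obstacle, which I would resolve by letting Player $2$ carry an explicit declared-match bit whose feasibility is tied internally to realizing $s$ as a subset sum.
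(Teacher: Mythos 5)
Your membership sketch and your choice of source problem match the paper (the paper defers membership to~\cite{Carvalho2018}, whose argument is essentially the support-guessing one you describe, and the hardness reduction is likewise from \pulley{} with non-existence forced through an unbounded escape variable). But your converse direction has a genuine gap, and it sits exactly where you predicted the crux would be. Your NO-case argument rests on the claim that against \emph{any} mixed strategy of Player $1$, Player $2$ has a best response that fully matches some $s$ in the support, making the expected coefficient of $u$ strictly positive. Under a bilinear per-bit agreement reward this is false. A bilinear coupling can only pay Player $2$ bit-by-bit (a function vanishing on all mismatches and positive exactly on equality is multilinear of degree $k$, not bilinear), so against a mixed $z$ Player $2$ maximizes \emph{expected} Hamming agreement. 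Let Player $1$ mix uniformly over all $2^k$ bit vectors: every realizable $g$ then has expected agreement $1-k/2<0$ for $k\geq 3$, so Player $2$'s best responses decline to match (or, in your declared-bit repair, honestly set $m=0$, since declaring a match is punished in expectation). The coefficient of $u$ is then nonpositive, $u=0$ is optimal, Player $1$ is best-responding (his payoff is constant in $z$ once the $u$-term is dead), and this profile is a Nash equilibrium of a NO-instance — the reduction breaks. Two subsidiary claims fail along the way: Player $2$'s best response need not match \emph{any} support point (expected agreement can be maximized off-support), and even when she matches one point with positive probability, the strictly negative mismatch values on the rest of the support can keep the expected coefficient of $u$ nonpositive. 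There is also a linearity snag you flagged but did not resolve: $u$ times a match indicator contains products $u z_j$ of Player $1$'s \emph{own} variables, and McCormick linearization is unavailable because $u$ is unbounded; delegating the indicator to Player $2$'s bit $m$ restores linearity (the cross term $u\,m$ is fine) but creates precisely the incentive hole above.

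The paper's construction shows how to make the converse robust to mixing: it never detects agreement between two independently randomized choices. Instead, representability is tested \emph{internally} on the matcher's own realized choice. Player $Z$ must select an integer $z$ in the interval; the dominant strictly concave term $Qz(2w-z)$ in~(\ref{PlayerZ}) forces $z$ toward $w$ (and is linearizable via~(\ref{PlayerZlin}) because it only involves products with the \emph{bounded} variable $z$, never with the unbounded $w_0$), while the knapsack constraint $\frac{1}{2}z_0+\sum_i q_i z_i\leq z$ with the $\frac{1}{2}z_0$ bonus makes $z_0=1$ optimal exactly when $Z$'s own chosen $z$ is not a subset sum. In the NO case \emph{every} value in the interval is representable, so every pure best response of $Z$ has $z_0=0$ no matter how $W$ mixes; hence $W$'s coefficient on the unbounded $w_0$ in~(\ref{PlayerW}) equals $1$ in expectation, $W$ has no best response, and no equilibrium exists — with no cross-player match detection needed. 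To salvage your gadget you would have to restructure it in the same spirit: the event triggering the escape direction must be determined by a single player's realized strategy (via her own feasible set), not by agreement between the two players' randomizations.
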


\begin{proof}
The proof that this decision problem belongs to  $\Sigma^p_2$ can be found in~\cite{Carvalho2018}.

It remains to show that it is  $\Sigma^p_2$-hard. We will reduce the following to  $\Sigma^p_2$-complete problem to it (see\cite{EgWo2012}):

%%%%%%%%%%%%%%%%%
\boxxx{
\vspace{0.2cm}
\textbf{Problem: {\pulley}}

\vspace{0.2cm}
\textbf{INSTANCE } A sequence $q_1,q_2,\ldots,q_k$ of positive integers; two positive integers $R$ and $r$ with $r\leq k$.

\vspace{0.2cm}
\textbf{QUESTION }Does there exist an integer $S$ with $R\leq S<R+2^r$ such that none of the
subsets $I\subseteq\{1,\ldots,k\}$ satisfies $\sum_{i\in I}q_i=S$?
}

Our reduction starts from an instance of {\pulley}. We construct the following instance of {\IPG}
\begin{itemize}
\item The game has two players, $M=\lbrace Z, W \rbrace$.
\item Player $Z$ solves
\begin{subequations}
\begin{alignat}{4}
  \max_{z} &  \quad   \frac{1}{2}z_0+\sum_{i=1}^k q_i z_i + Qz(2w-z)\\
  s.t.  &\quad  \frac{1}{2}z_0+\sum_{i=1}^k q_i z_i \leq z\\
  & \quad z_0, z_1, \ldots, z_k \in \lbrace 0,1 \rbrace\\
  & \quad R \leq z \leq R+2^r-1, z \in \mathbf{N}
\end{alignat}
\label{PlayerZ}%
\end{subequations}

Add binary variables $y \in \lbrace 0,1\rbrace^r$ and make $z =R + \sum_{i=0}^{r-1} 2^i y_i$. Note that $z^2= Rz+\sum_{i=0}^{r-1} 2^i y_iz$. Thus, replace $y_iz$ by $h_i$ and add the respective McCormick constraints~\cite{McCormick76}. In this way, we can equivalently linearize the previous problem:
\begin{subequations}
\begin{alignat}{4}
  \max_{z,y,h} &  \quad   \frac{1}{2}z_0+\sum_{i=1}^k q_i z_i +2 Qzw-QRz-\sum_{i=0}^{r-1} 2^i  h_i\\
  s.t.  &\quad  \frac{1}{2}z_0+\sum_{i=1}^k q_i z_i \leq z\\
  & \quad z_0, z_1, \ldots, z_k \in \lbrace 0,1 \rbrace\\
  & \quad R \leq z \leq R+2^r-1, z \in \mathbf{N}\\
  & \quad z = R+ \sum_{i=0}^{r-1} 2^i y_i\\
  & \quad y_0, y_1, \ldots, y_{r-1} \in \lbrace 0,1 \rbrace\\
  & \quad h_i \geq 0 &i=0,\ldots,r-1\\
  & \quad h_i \geq z+(R+2^r-1)(y_i -1) & i=0,\ldots,r-1 \\
  & \quad h_i \leq z+R(y_i-1) & i=0,\ldots,r-1\\
  & \quad h_i \leq (R+2^r-1)y_i & i=0,\ldots,r-1
\end{alignat}
\label{PlayerZlin}%
\end{subequations}
For sake of simplicity, we consider the quadratic formulation (\ref{PlayerZ}). The linearization above serves the purpose of showing that the proof is valid even under linear utility functions for the players. 
\item Player W solves 
\begin{subequations}
\begin{alignat}{4}
  \max_{w} &  \quad   (1-z_0)w_0\\
  s.t.  & \quad R \leq w \leq R+2^r-1, z \in \mathbf{N}\\
  &w_0 \in \mathbf{R}
\end{alignat}
\label{PlayerW}%
\end{subequations}
\end{itemize}
%%%%%%%%%%%%%%%%%

(Proof of if).  Assume that the {\pulley} instance has answer YES. Then, there is  $R\leq S<R+2^r$  such that for all subsets $I\subseteq\{1,\ldots,k\}$, we have $\sum_{i\in I}q_i\neq S$. Let player $W$ strategy be $w^*=S$ and $w_0^*=0$. Note that the term $Qz(2w-z)$ in player $Z$'s utility is dominant and attains a maximum when $z$ is equal to $w$. Thus, make $z^*=w^*=S$ and since $\sum_{i=1}^k q_i z_i$ is at most $S-1$, make $z_0^*=1$. Choose $z_i^*$ such that the remaining utility of player $Z$ is maximized. By construction, player $Z$ is selecting her best response to $(w^*,w_0^*)$. Sinze $z_0^*=1$, then player $W$ is also selecting an optimal strategy. Therefore, we can conclude that there is an equilibrium.

(Proof of only if). Assume that the {\pulley} instances has answer NO. Then, for all $R\leq S<R+2^r$, there is a subset $I\subseteq\{1,\ldots,k\}$ such that $\sum_{i\in I}q_i=S$. In this case, player $Z$ will always make $z_0=0$ which gives incentive for player $W$ to choose $w_0$ as  large as possible. Since $w_0$ has no upper bound, there is no equilibrium for the game.
\qed
\end{proof}

\section*{Acknowledgements}

The authors wish to thank Sriram Sankaranarayanan for bringing to our attention the incompleteness of our previous proof in~\cite{Carvalho2018}. 

We wish acknowledge the support of the Institut de valorisation des donn\'ees and  the Canadian Natural Sciences and Engineering Research Council  under the discovery grants program.

\bibliographystyle{unsrt}  
%\bibliography{references}  %%% Remove comment to use the external .bib file (using bibtex).
%%% and comment out the ``thebibliography'' section.

%%% Comment out this section when you \bibliography{references} is enabled.

\end{document}